\def \bi {\begin{itemize}\item}
\def \ei {\end{itemize}}
\def \be {\begin{equation}}
\def \ee {\end{equation}}
\def \ba {\begin{aligned}}
\def \ea {\end{aligned}}
\def \non {\nonumber}
\begin{document}
\newtheorem{theorem}{Theorem}
\newtheorem{lemma}{Lemma}
%

\title{Network Non-Neutrality on the Internet: Content Provision Under a Subscription Revenue Model}

%

\numberofauthors{3}

\author{
\alignauthor
Mohammad Hassan Lotfi\\
\affaddr{ESE Department}      \\ 
       \affaddr{University of Pennsylvania}\\
       \affaddr{Philadelphia, PA, 19104}\\
       \email{lotfm@seas.upenn.edu}
\alignauthor
George Kesidis\\
       \affaddr{EE and CS\& E Department}\\
       \affaddr{Pennsylvania State University}\\
       \affaddr{University Park, PA, 16803 }\\
       \email{kesidis@engr.psu.edu}
\alignauthor 
Saswati Sarkar\\
      \affaddr{ESE Department}      \\ 
       \affaddr{University of Pennsylvania}\\
       \affaddr{Philadelphia, PA, 19104}\\
       \email{swati@seas.upenn.edu}
}

\maketitle
\begin{abstract}
The goal of this paper is to provide an insight into the equilibrium of the Internet market, when the current balance of the market is disrupted, and one of the ISPs switches to a non-neutral regime. We consider a content provider with a subscription revenue model and a continuum of end-users. The CP is also non-neutral, in the sense that she can charge users of different ISPs different subscription fees, and use this ``leverage" to control the equilibrium outcome. Results reveal that the CP is able to control the non-neutral ISP to some extend. However, switching to a non-neutral regime by an ISP tips the balance of the market in favor of this ISP. 
\end{abstract}




\section{Introduction}
Net neutrality on the Internet is perceived as the policy that mandates Internet Service Providers (ISPs) to treat all data equally, regardless of the source, destination, and type of the data \cite{progressive}. Recently, the net-neutrality debate has received more attention, since in January 2014, a federal appeals court struck down parts of the Federal Communication Commission's rules for Net-Neutrality \cite{NYtimes1}. Subsequently, Comcast and Netflix signed an agreement in February 2014 in which Netflix should pay Comcast for a faster access to Comcast's subscribers \cite{NYtimes2}. 

However, changing the balance of the Internet market by ISPs and switching to a non-neutral regime may eventually lead to a battle between ISPs and Content Providers (CPs). Depending on their power over the market, the winner can be any of the two. For instance, one can think about scenarios that ISPs end up paying a popular CP to have the privilege of carrying its data. The goal of this paper is to model the interaction between ISPs and CPs in a non-neutral regime. 

Most of the literature about the Net Neutrality debate fall into the realm of law and policy making. A survey of the existing scarce literature on economics analysis of the net-neutrality debate is presented in \cite{survey}. In this genre of work, the social welfare analysis of the neutral and non-neutral regime has been taken into account largely, while the results and conclusions vary to a great extend depending on the model used. In some cases, non-neutrality increases the social welfare because of the increase in the investments by ISPs. On the other hand, some results imply a lower social welfare in a non-neutral regime than a neutral one. 

In this paper, we try to get an insight into the equilibrium of the Internet market, when one of the ISPs switches to non-neutrality. Elements of the market are two ISPs, a CP with a subscription revenue model, and a continuum of end-users. In our model, the CP should pay a per-subscriber fee to the non-neutral ISP in order to access the ISP's subscribers. In addition, the CP can potentially charge users of different ISPs with different subscription fees. This is what we call the \emph{``leverage"} of the CP over end-users and subsequently ISPs, by which the CP can control the equilibrium of the market.  

We use the standard two-sided market framework introduced in \cite{armstrong}, and used in the context of the Internet market in \cite{economides} (Section~\ref{section:model}). We seek the sub-game perfect equilibrium of the sequential game using backward induction (Section~\ref{section:SPE}).  The results imply that in spite of the leverage of the CP  over ISPs and end-users, switching to non-neutrality will tip the balance of the network in favor of the non-neutral ISP. More discussions on the equilibrium of the market and its implications are presented in Section~\ref{section:discussion}.  

\section{Market Model}\label{section:model}

We cast the problem with a two-sided market framework which captures the Internet market appropriately. In addition, the problem is modelled as a sequential game. Two Internet Service Providers (ISPs) are considered that act as platforms that connect the two sides of the Internet market: a Content Provider (CP) that charges end-users with a subscription fee and a continuum of end-users.

\subsubsection*{The ISPs}

ISPs provide connection between CPs and end-users. We assume that one of the ISPs is neutral (ISP N) and the other is non-neutral (ISP NN). ISP N and NN provide Internet connection with the same quality to end-users in exchange of a subscription fee, $p_N, p_{NN}\in \mathbb{R}$, respectively.  Once the subscription fee is paid by an end-user to the neutral ISP, ISP N provides the connection between the end-user and the CP. However, ISP NN offers the CP a take-it-or-leave-it per-subscriber fee, $\tilde{p}$. This means that if the CP takes the offer, she should pay a fixed fee $\tilde{p}\in \mathbb{R}$ per each subscriber she gets from those end-users that are connected to ISP NN. If not, ISP NN will block the access of the CP to her end-users.


\begin{table*}[t]
\small
\centering
    \begin{tabular}{ | p{5.5cm} || p{10cm} |}
    \hline
    Symbol & Description \\ \hline \hline
    $p_N$  & Internet connection fee charged by ISP N \\ \hline 
    $\tilde{p}$ & per-subscriber fee that G pays to ISP NN\\ \hline
    $q_N$ & subscription fee that G charges ISP N's users \\ \hline
	$\pi_N$ & payoff of ISP N\\ \hline
	$n_{C_{N}}$ & fraction of users that buy Internet from ISP N\\ \hline
	$n'_{C_N}$ & fraction of users that buy Internet from ISP N and pay for the content of G\\ \hline
	$u_{j,I}(x)$ & payoff from accessing to Internet for a user located at $x$ connecting to ISP j \\ \hline
	$u_{j,G}(x)$ & payoff from content of G for a user located at $x$ connecting to ISP j\\ \hline
	$u_{j}(x)$ & total payoff of a user located at $x$ connecting to ISP j\\ \hline
	$\pi_G$ & payoff of the CP G\\ \hline  
  $p_{NN}$, $q_{NN}$,  $\pi_{NN}$, $n_{C_{NN}}$, $n'_{C_{NN}}$ & corresponding parameters for ISP NN\\ \hline
	$\Delta q$ & $q_{NN}-q_N$\\ \hline
	$p^e_N$, $p^e_{NN}$, $\tilde{p}^e$, $q^e_N$, $q^e_{NN}$, $n^e_{C_N}$, $n^e_{C_{NN}}$, $\Delta q^e$ & corresponding parameters at the equilibrium\\ \hline
    \end{tabular}
    \caption{Important Symbols }\label{table:symbols}
\end{table*}

\normalsize

\subsubsection*{The CP}

We assume that there exists only one CP, and denote this CP by G.  However, the results of this paper can be generalized to a continuum of independent monopolist CPs that do not compete with each other. 

In our model, CP G makes profit by a subscription-based model and can multi-home, i.e. can offer her content via both ISPs . In addition, G is considered to be non-neutral, in the sense that she can charge users of different ISPs differently. This strategic behavior of the CP can be interpreted as a leverage by which CP G can potentially decrease the incentive of an ISP for switching to a non-neutral regime. We denote the subscription fees that G charges users of non-neutral and neutral ISP by $q_N, q_{NN}\in \mathbb{R}$, respectively.


\subsubsection*{The End-Users}

We assume a continuum of single-homing customers, i.e. they can connect to the Internet via only one ISP. We model the valuation of customers for ISPs and CP G using a hotelling model. In our hotelling model, we assume that the neutral ISP is located at 0, the non-neutral one is located at 1, and end-users are distributed uniformly along the unit interval, $[0,1]$. 

Let $v$ denote the common valuation of end-users for the Internet connection. The overall valuation of an end-user located at $x\in[0,1]$ for connecting to the Internet via the neutral and non-neutral ISP is $v-tx$  and $v-t(1-x)$, respectively, where  $tx$ and $t(1-x)$ are the transport costs and $t$ is the marginal transport cost. In words, the closer an end-user to an ISP, the more the end-user prefers this ISP to the other\footnote{Note that the distance between an ISP and an end-user defined in this model is different from the physical distance.}.

One possible interpretation for the usage of the transport cost is to capture the inertia of end-users to change their current ISP due to  their high budget and their prior experience with the ISP. In other words, wealthier users with high budget are less sensitive to the price they pay, more reluctant to change their ISP, and subsequently  located near one of the ISPs depending on their prior experience with ISPs. Thus these users have higher valuation for connecting to the Internet via the closest ISP.  

In addition, we assume that the valuation of a user for content of G increases by her budget. In other words, the less sensitive a user to the price, the higher the chance of buying the content from G. Since we assumed that users with higher budgets are located near one of the ISPs, the closer an end-user to one of the ISPs, i.e. points 0 or 1 on the unit interval, the higher the valuation of the user for CP G. Let $v^*$ denote the common valuation of end-users for CP G. A possible expression for the overall valuation of an end-user located at $x\in[0,1]$ for CP G is $v^*-t\min\{x,1-x\}$.


Figure~\ref{figure:two sided} illustrates a schematic view of the two sided market we consider. A list of more important symbols, used throughout the paper, is presented in Table~\ref{table:symbols}.

\begin{figure}[t]
\centering
\includegraphics[width=0.45\textwidth]{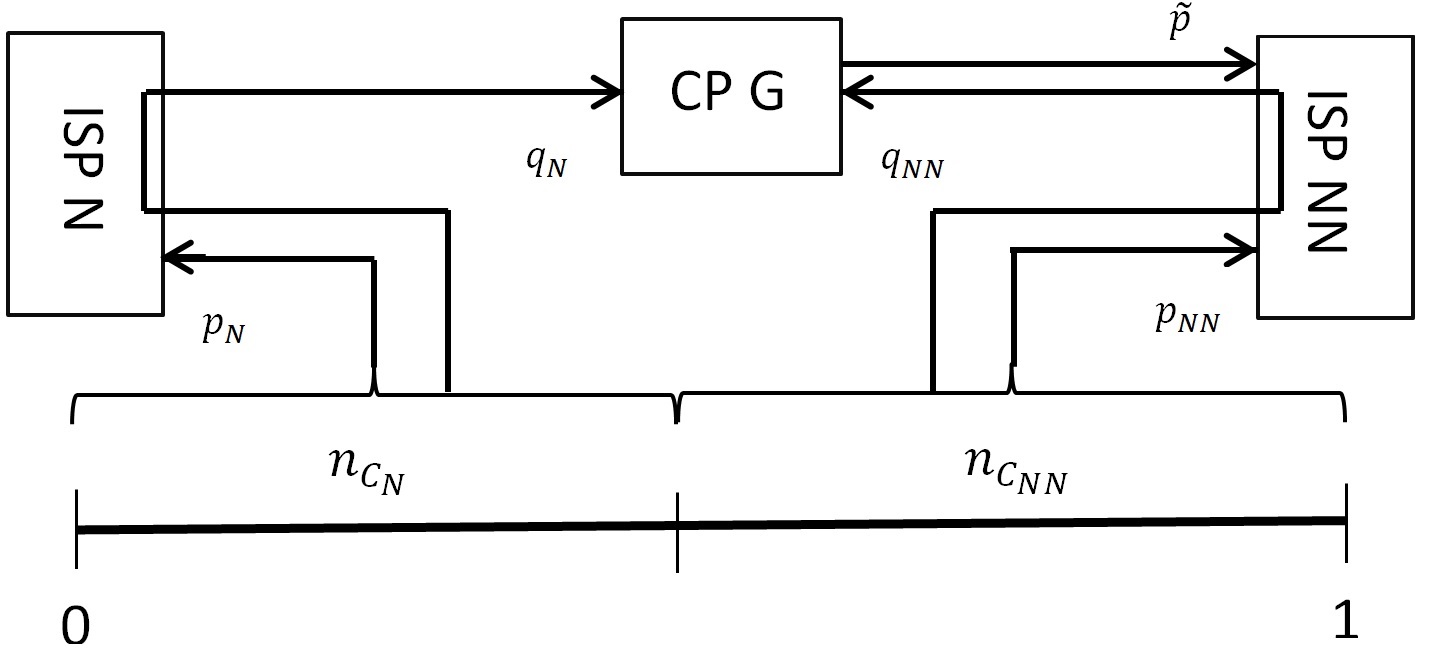}
\caption{Schematic view  of our two sided market }
\label{figure:two sided}
\end{figure}

\subsubsection*{Payoffs}

The payoffs obtained by ISP NN and N are:

\begin{eqnarray}\label{equation:utilityISP}
\begin{aligned}
\pi_{NN}&=(p_{NN}-c)n_{C_{NN}}+\tilde{p}n'_{C_{NN}}\\
\pi_N&=(p_N-c)n_{C_N}
\end{aligned}
\end{eqnarray}
where $c$ is the per connection cost; $n_{C_N}$, $n_{C_{NN}}$, and  $n'_{C_{NN}}$ are defined in Table~\ref{table:symbols}.

At the end-user side, the payoff of a user that is located at distance $y$ of ISP $j\in\{N,NN\}$ consists of two parts: the payoff from the Internet connection, and the payoff from accessing to the content of G:
\be \label{equ:UtilityFromI}
u_{j,I}(y)=v-ty-p_j
\ee
\be\label{equ:UtilityFromG}
u_{j,G}(y)=v^*-t\min\{y,1-y\}-q_j 
\ee
 Thus, the payoff of the end-user is,
\be\label{equation:CP_2}
\ba
u_j(y) =
\left\{
	\begin{array}{ll}
		u_{j,I}(y)+u_{j,G}(y)  &\mbox{Pays for Internet and G}\\
	u_{j,I}(y) &\mbox{Pays for Internet and not G}\\
0 &\mbox{Does not pay for Internet}
	\end{array}
\right.
\ea 
\ee





At the CP side, the payoff of G is,
\begin{equation}\label{equation:utilityCP}
\pi_{G}=(q_{NN}-\tilde{p})n'_{C_{NN}}+q_Nn'_{C_N}
\end{equation}

\subsubsection*{Assumptions about the Market}

We assume the full market coverage by ISPs. In other words, $p_N$ and  $p_{NN}$ are chosen by ISPs to ensure that all end-users are willing to pay for the Internet connection to at least one ISP. More formally, at the equilibrium, for $x\in[0,1]$,
\be \label{equ:fullcoverage}
u_j(x)\geq 0 \quad \text{ for at least one } j\in\{N,NN\}
\ee

We argue in Section~\ref{section:discussion} that $v\geq 2t+c$ is a sufficient condition for full market coverage at the equilibrium. Note that the full coverage by ISPs is a natural and common assumption. This assumption bundled with hotelling model is necessary to create an element of competition between ISPs. 

In addition, when solving the problem of CP in Section~\ref{section:SPE}, we assume that CP G chooses $q_N$ and $q_{NN}$ such that all end-users are willing to pay the subscription fee after connecting to the Internet. Thus,  $n'_{C_{NN}}=n_{C_{NN}}$ and $n'_{C_{N}}=n_{C_{N}}$. More formal condition for the market coverage by $G$ is that for $x\in[0,1]$,
$$
u_{j,G}(x)\geq 0 \quad \text{for ISP j that user selects}
$$
The full market coverage by G arises because of the fact that for CP G, not only the subscription fee, but also the number of subscribers is important. This may happen because of connectivity considerations (e.g. social networks with subscription revenue model that obviously care about maximum connectivity), or financial considerations (aiming to fetch some advertisement-based revenue). More discussion about the implications of this assumption is deferred to Section~\ref{section:discussion}.

  
\subsubsection*{Timing}

The timing of the sequential game is modeled as follows:
\begin{enumerate}
\item ISPs announce the per-subscriber fee for the CP G ($\tilde{p}$).
\item G sets the subscription fees ($q_N$ and $q_{NN}$).
\item ISPs set the connection fees ($p_N$ and $p_{NN}$).
\item End-users decide whether to participate and, if so, which ISP to join.
\end{enumerate}
We assume that the per-subscriber fee is set before the connection fees by ISP NN in order to capture the longer time horizon of contracts between ISPs and CPs. In addition, to make the game between ISPs and the CP more balanced, it is assumed that ISPs set the connection fees after G sets the subscription fees. 

\section{The Sub-Game Perfect Equilibrium}\label{section:SPE}
In this section, we seek the sub-game perfect equilibrium of the sequential game using backward induction.  Thus, we start with the last stage:

\subsubsection*{Stage 4: End-users decide:}

First note that the full market coverage by CP G yields that $p_N$, $p_{NN}$, $q_N$, and $q_{NN}$ are determined such that end-users always pay for the connection and the content of G. Thus in \eqref{equation:CP_2}, $u_j(y)=u_{j,I}(y)+u_{j,G(y)}$. The customer located at $x_n\in[0,1]$ is indifferent between connecting to ISP 1 and 2:

$$
\begin{aligned}
v-&t(1-x_n)-p_{NN}+v^*-t\min\{1-x_n,x_n\}-q_{NN}\\
&=v-tx_n-p_N+v^*-t\min\{x_n,1-x_n\}-q_N\\
\end{aligned}
$$ 
\begin{equation}\label{equation:equ4}
\Rightarrow x_n=n^e_{C_N}=\frac{1}{2}+\frac{p_{NN}-p_{N}+q_{NN}-q_{N}}{2t}
\end{equation}
and the full market coverage implies that  $n^e_{C_{NN}}=1-n^e_{C_N}$.

\subsubsection*{Stage 3: ISPs set $p_N$ and $p_{NN}$:}

 Utilizing the equilibrium outcome of the fourth stage \eqref{equation:equ4}, payoffs obtained by the neutral and non-neutral ISPs are:

\begin{equation}
\begin{aligned}
\pi_{NN}&=(p_{NN}+\tilde{p}-c)(\frac{1}{2}-\frac{p_{NN}-p_{N}+q_{NN}-q_{N}}{2t})\\
\pi_N&=(p_N-c)(\frac{1}{2}+\frac{p_{NN}-p_{N}+q_{NN}-q_{N}}{2t})
\end{aligned}
\end{equation}

Note that $\pi_{NN}$ and $\pi_{N}$ are concave functions of $p_{NN}$ and $p_N$, respectively.  Applying the first order condition to payoffs ($\frac{d\pi_N}{dp_{NN}}=0$ and $\frac{d\pi_{N}}{dp_N}=0$) yields the equilibrium strategies, $p^e_{NN}$ and $p^e_{N}$: 

\begin{equation}\label{equation:price_customers}
\begin{aligned}
p^e_{NN}&=t+c-\frac{q_{NN}-q_N}{3}-\frac{2\tilde{p}}{3}\\
p^e_{N}&=t+c+\frac{q_{NN}-q_N}{3}-\frac{\tilde{p}}{3}
\end{aligned}
\end{equation}

Note that $\tilde{p}$ is known from stage $1$ (to be explained later). Thus, it is not explicitly dependent on $p_N$ and $p_{NN}$. By substitution in \eqref{equation:equ4}, the equilibrium outcome of the fourth stage can be re-written as,
\begin{equation}\label{equation:customers}
\begin{aligned}
n^e_{C_N}&=\frac{1}{2}+\frac{q_{NN}-q_N-\tilde{p}}{6t}\\
n^e_{C_{NN}}&=\frac{1}{2}-\frac{q_{NN}-q_N-\tilde{p}}{6t}
\end{aligned}
\end{equation}

\subsubsection*{Stage 2: G sets $q_N$ and $q_{NN}$:}

We first state the conditions for the full coverage of the end-user market by CP G.  Let $\hat{x}_N$ (respectively, $\hat{x}_{NN}$) denote the location of the customer with the largest distance from N (resp., NN), which opts to buy the content from G if she joins ISP N (resp., NN). Using the valuation of users for G, for the full market coverage, $q_N$ and $q_{NN}$ should be chosen such that:

\begin{figure}[t]
\begin{center}
\resizebox{8.5cm}{1.3cm}{
\begin{tikzpicture}
\draw [ultra thick] (0,2) -- (8,2);
\draw [thick] (0,1.9) -- (0,2.1)  node[above]{ISP N};
\draw [thick] (8,1.9) -- (8,2.1) node[align=left,above]{ISP NN} ;
\draw [thick] (3,1.9)  -- (3,2.1) node[above]{$\hat{x}_{NN}$};
\draw [thick] (5,1.9) -- (5,2.1)  node[above]{$\hat{x}_N$};
\draw [thick] (4.2,1.9) -- (4.2,2.1)  node[above]{$n^e_{C_N}$};
\draw [thick] [<->] (4.2,1.8) --node[below]{\scriptsize{$n^e_{C_{NN}}$}}(8,1.8) ;
\draw [thick] [<->] (0,1.8) --node[below]{\scriptsize{$n^e_{C_{N}}$}}(4.2,1.8) ;
\end{tikzpicture}
}
\end{center}
\caption{Conditions for full Market Coverage}\label{figure:structure}
\end{figure}
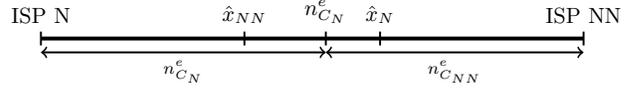

\begin{itemize}
\item For the neutral ISP:
\ei

- If $n^e_{C_N}\geq\frac{1}{2}$: From~\eqref{equ:UtilityFromG}, $v^*-\frac{t}{2}-q_N\geq 0 \Rightarrow q_N\leq v^*-\frac{t}{2}$ . In this case, all the customers in the unit interval are willing to buy the subscription plan. 

- If $n^e_{C_N}<\frac{1}{2}$:  $\hat{x}_N\geq n^e_{C_N}$ (Figure~\ref{figure:structure}), where $\hat{x}_N$ is such that $v^*-t\hat{x}_N-q_N = 0$. Thus, $\hat{x}_N=\frac{v^*-q_N}{t}\geq n_{C_N}$.

\bi  For the non-neutral ISP:
\ei 

-If $n^e_{C_N}<\frac{1}{2}$:  $v^*-\frac{t}{2}-q_{NN}\geq 0 \Rightarrow q_{NN}\leq v^*-\frac{t}{2}$ .

- If  $n^e_{C_N}\geq \frac{1}{2}$:  $\hat{x}_{NN}\leq n^e_{C_N}$ (Figure~\ref{figure:structure}), where $\hat{x}_{NN}$ is such that $v^*-t(1-\hat{x}_{NN})-q_{NN} = 0$. Thus, $\hat{x}_{NN}=\frac{t+q_{NN}-v^*}{t}\leq n_{C_N}$.

Taking $\Delta q=q_{NN}-q_N$ yields that $n_{C_N}\geq\frac{1}{2}$ (resp., $n_{C_N}<\frac{1}{2}$) IFF $\Delta q \geq \tilde{p}$ (resp., $\Delta q < \tilde{p}$). Thus, G  determines $q_N$ and $q_{NN}$ by solving the following optimization problem:

\begin{equation}\label{equ:maximization}
\max \pi_{G}=(q_{NN}-\tilde{p})(\frac{1}{2}-\frac{\Delta q-\tilde{p}}{6t})+q_N(\frac{1}{2}+\frac{\Delta q-\tilde{p}}{6t})
\end{equation}

s.t:

\begin{equation}\label{equ:const1}
\left\{
	\begin{array}{ll}
		 q_N\leq v^*-\frac{t}{2}=u_{q_N} \\
		q_{NN}\leq v^*-\frac{t}{2}+\frac{\Delta q-\tilde{p}}{6}=u_{q_{NN}}
	\end{array}
\right.    \text{if} \quad  \Delta q \geq \tilde{p}
\end{equation}

\begin{equation}\label{equ:cons2}
\left\{
	\begin{array}{ll}
		 q_N\leq v^*-\frac{t}{2}-\frac{\Delta q-\tilde{p}}{6}=u'_{q_N} \\
		q_{NN}\leq v^*-\frac{t}{2}=u'_{q_{NN}}
	\end{array}
\right.    \text{if} \quad \Delta q \leq  \tilde{p}
\end{equation}

The solution is presented in the following theorem. Theorem~\ref{theorem:eq2} is proved in the Appendix.

\begin{theorem}\label{theorem:eq2}
The equilibrium strategy of the CP G depends on $\tilde{p}$ and is as follows:

\begin{itemize}
\item If $\tilde{p}\leq -\frac{5t}{4}$, then $\Delta q^e=\tilde{p}+\frac{3t}{2}$, $q^e_{N}=u_{q_N}=v^*-\frac{t}{2}$, and $q^e_{NN}=q^e_N+\Delta q^e$.
\item If $ -\frac{5t}{4} \leq \tilde{p} \leq 0$, then $\Delta q^e=-\frac{\tilde{p}}{5}$, $q^e_{N}=u_{q_N}=v^*-\frac{t}{2}$, and $q^e_{NN}=q^e_N+\Delta q^e$.
\item If  $ 0 \leq \tilde{p} \leq \frac{5t}{4}$, then $\Delta q^e=-\frac{\tilde{p}}{5}$, $q^e_{NN}=u'_{q_{NN}}=v^*-\frac{t}{2}$, and $q^e_{N}=q^e_{NN}-\Delta q^e$.
\item If $\tilde{p}\geq \frac{5t}{4}$, then $\Delta q^e=\tilde{p}-\frac{3t}{2}$, $q^e_{NN}=u'_{q_{NN}}=v^*-\frac{t}{2}$, and $q^e_{N}=q^e_{NN}-\Delta q^e$.
\end{itemize} 
\end{theorem}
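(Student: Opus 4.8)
The plan is to first put the objective \eqref{equ:maximization} into a transparent form. Substituting $q_{NN} = q_N + \Delta q$ and expanding, I expect the two terms linear in $\Delta q$ to recombine, leaving
\[
\pi_G = \tfrac{1}{2}\bigl(q_N + q_{NN} - \tilde{p}\bigr) - \frac{(\Delta q - \tilde{p})^2}{6t}.
\]
This exposes the two competing forces: $\pi_G$ grows linearly in the total fee $q_N + q_{NN}$ but pays a quadratic penalty whenever $\Delta q$ strays from $\tilde{p}$. Because $\pi_G$ is strictly increasing in the total fee at fixed $\Delta q$ (raise both fees equally), at least one upper-bound constraint must be tight at the optimum; this is what lets me collapse a two-variable constrained problem into a one-variable one.

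Next I would treat the region $\Delta q \geq \tilde{p}$ governed by \eqref{equ:const1}. Writing $a = v^*-t/2$ and parametrizing by $(q_N,\Delta q)$, the two constraints read $q_N \leq a$ and $q_N \leq a - \tfrac{5\Delta q + \tilde{p}}{6}$. Pushing $q_N$ up to the tighter bound, which switches at $\Delta q = -\tilde{p}/5$, splits the region into two subintervals on each of which $\pi_G$ is a strictly concave function of $\Delta q$ alone. Solving the first-order condition on the branch where only $q_N \leq a$ binds gives the interior candidate $\Delta q = \tilde{p} + 3t/2$; checking the feasibility requirement $\Delta q \leq -\tilde{p}/5$ reduces it exactly to $\tilde{p} \leq -5t/4$. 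When $-5t/4 < \tilde{p} \leq 0$ the interior point is infeasible, so by concavity the maximum lands on the boundary $\Delta q = -\tilde{p}/5$, where both constraints bind simultaneously; I would also confirm that on the other branch $\pi_G$ is monotonically decreasing, so nothing is gained there. This recovers the first two bullets.

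The region $\Delta q \leq \tilde{p}$ governed by \eqref{equ:cons2} is the mirror image: parametrizing by $(q_{NN},\Delta q)$ with $q_{NN} \leq a$ now the binding NN-side bound, the identical computation yields the interior candidate $\Delta q = \tilde{p} - 3t/2$, feasible precisely when $\tilde{p} \geq 5t/4$, and otherwise the boundary solution $\Delta q = -\tilde{p}/5$. To finish I would compare the two regions for a common $\tilde{p}$: for $\tilde{p} \leq 0$ the best attainable in $\Delta q \leq \tilde{p}$ is merely $\pi_G = a$ at $\Delta q = \tilde{p}$, whereas the region $\Delta q \geq \tilde{p}$ attains strictly more than $a$, so the latter wins; the case $\tilde{p} \geq 0$ is symmetric. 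Stitching the four feasibility thresholds together produces exactly the four regimes of the statement.

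The main obstacle is bookkeeping the nested piecewise structure rather than any single hard estimate: the constraint set itself switches at $\Delta q = \tilde{p}$, and inside each branch the binding constraint switches again at $\Delta q = -\tilde{p}/5$, so there are four regions to track. The care lies in not mislabelling which constraint is active, in verifying concavity on each piece so that an infeasible first-order point genuinely forces the optimum onto a boundary, and in the final cross-region comparison that certifies \emph{global} rather than merely local optimality. Once the clean form of $\pi_G$ above is in hand, each individual calculation is routine.
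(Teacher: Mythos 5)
Your proposal is correct and recovers all four regimes with the right thresholds $\pm\frac{5t}{4}$; structurally it is the same backward case analysis as the paper's, but the execution differs in ways worth recording. The paper proceeds via two lemmas: Lemma~\ref{lemma:upper} (at least one of $q_N,q_{NN}$ at its upper bound --- your ``raise both fees equally'' observation is exactly its proof, made transparent by your correct rewriting $\pi_G=\frac{1}{2}(q_N+q_{NN}-\tilde{p})-\frac{(\Delta q-\tilde{p})^2}{6t}$) and Lemma~\ref{lemma:sets}, whose four candidate sets correspond one-to-one with your four branches: your switch point $\Delta q=-\tilde{p}/5$ inside each constraint regime is precisely the paper's conditions $\tilde{p}\gtrless -5\Delta q$. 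What your route buys: (i) the closed quadratic form makes concavity immediate, where the paper asserts it; (ii) your disposal of the two ``extra'' branches by monotonicity (on $\Delta q\geq\max\{\tilde{p},-\tilde{p}/5\}$ the derivative is $-\frac{1}{3}-\frac{\Delta q-\tilde{p}}{3t}<0$, and symmetrically increasing on the mirror branch, so their suprema sit on boundaries shared with the surviving branches) is actually tighter than the paper's corresponding step, which eliminates its sets 1 and 4 by remarking that their boundary points are feasible in sets 2 and 3 --- an observation that by itself does not preclude optimality (indeed for $-\frac{5t}{4}\leq\tilde{p}\leq 0$ the set-1 boundary point coincides with the true optimum); and (iii) your cross-region comparison (the losing regime attains exactly $v^*-\frac{t}{2}$ at $\Delta q=\tilde{p}$ when $\tilde{p}\leq 0$, versus strictly more in the winning regime, and symmetrically $v^*-\frac{t}{2}-\tilde{p}$ when $\tilde{p}\geq 0$) replaces the paper's explicit payoff evaluations at $\Delta q\in\{\tilde{p},-\tilde{p}/5\}$ and certifies \emph{global} optimality over the non-convex union of \eqref{equ:const1} and \eqref{equ:cons2}, which the paper handles only implicitly. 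One detail to make explicit in a write-up: when $\tilde{p}>0$ your first branch's interval $[\tilde{p},-\tilde{p}/5]$ is empty, so the whole region $\Delta q\geq\tilde{p}$ falls under the decreasing branch with maximum $v^*-\frac{t}{2}-\tilde{p}$ at $\Delta q=\tilde{p}$; your ``symmetric'' remark covers this, but that bookkeeping is exactly where sign errors hide.
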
 

\subsubsection*{Stage 1: The Non-Neutral ISP sets $\tilde{p}$:}

Using the equilibrium strategies obtained in stages 2 to 4, the payoff of the non-neutral ISP depending on the range of $\tilde{p}$ she chooses is:

\be
\ba
\pi^e_{NN} =
\left\{
	\begin{array}{ll}
		\frac{t}{8}  &\mbox{if } \tilde{p}\leq -\frac{5t}{4}\\
	(t+\frac{2}{5}\tilde{p})(\frac{1}{2}+\frac{\tilde{p}}{5}) &\mbox{If } -\frac{5t}{4} \leq \tilde{p} \leq \frac{5t}{4}\\
\frac{9t}{8} &\mbox{If } \tilde{p}\geq \frac{5t}{4}
	\end{array}
\right.
\ea 
\non
\ee
Therefore at the equilibrium, $\tilde{p}^e \geq \frac{5t}{4}$.

\section{Discussion}\label{section:discussion}

We first find a sufficient condition for the full market coverage by ISPs. At the equilibrium, $\tilde{p}^e\geq \frac{5t}{4}$, $p^e_{NN}=\frac{3t}{2}+c-\tilde{p}^e$, and $p^e_{N}=c+\frac{t}{2}$. Thus, the maximum price that ISPs can charge users occurs when $\tilde{p}^e=\frac{5t}{4}$, and is less than $t+c$. Thus, from \eqref{equ:UtilityFromI} and \eqref{equ:fullcoverage}, the assumption $v>2t+c$ is a sufficient condition for the full market coverage. 

Note that $q^e_{N}$ and $q^e_{NN}$ are adjusted by G such that the payoff of ISP NN for $\tilde{p} \geq \frac{5t}{4}$ is independent of $\tilde{p}$.  Thus, the non-neutral ISP cannot strictly increase her payoff by charging CP G a very high price. This is the result of the leverage of the CP G over ISPs and end-users. 

In addition, note that at the equilibrium, $n^e_{C_{NN}}>n^e_{C_{N}}$, i.e. the non-neutral ISP attracts more customers, and $p^e_{NN}<p^e_{N}$ which implies that the non-neutral ISP subsidizes end-users with the money that she collects from $G$. This increases the differentiation between platforms, and makes the non-neutral ISP more favorable for end-users. 

Another interesting result is that  $\Delta q^e=\tilde{p}^e-\frac{3t}{2}$ can be positive or negative at the equilibrium, depending on $\tilde{p}^e$. If $\tilde{p}^e=\frac{5t}{4}$, then $\Delta q^e<0$. This means that CP G charges the users of the non-neutral ISP with a price lower than the users of the neutral one, which is counter-intuitive. To see what derives  this counter-intuitive result, first note that from \eqref{equation:customers}, $n_{C_{NN}}=\frac{1}{2}-\frac{\Delta q-\tilde{p}}{6t}$. Thus, if ISP NN be able to decrease $\Delta q -\tilde{p}$ by choosing $\tilde{p}$ high enough, she can obtain more per-subscriber revenue from G and also increase the number of her end-users. The later is because of the fact that ISP NN subsidizes end-users from the money she gets from G  ($p^e_{NN}=\frac{3t}{2}+c-\tilde{p}$). Thus, the higher $\tilde{p}$, the higher the subsidies, and so the number of end-users that choose  ISP NN for Internet connection. This leads to the desire of ISP NN to choose $\tilde{p}$ in the equilibrium such that $n^e_{C_{NN}}>\frac{1}{2}$, and force the equilibrium solution of CP G to satisfy constraints \eqref{equ:cons2} rather than \eqref{equ:const1}. In this case, G maximizes her payoff by increasing the upper bound on $q_N$\footnote{Note that the upper bound for $q_{NN}$ is fixed in these constrains.}, which she do so by choosing $\Delta q-\tilde{p}$ to be negative. This leads to a higher upper bound for $q_N$ in comparison to $q_{NN}$, which leads to our counter-intuitive, yet important, result.  

An alternative interpretation for this result is that given that ISP NN attracts more than half of the end-users by appropriately choosing $\tilde{p}$ and subsidizing the end-users thereafter, G knows that those who stay with the neutral ISP are those users who are less sensitive to price and have a higher budget. Therefore G compensates what she pays to ISP NN by charging these users a higher price. 

The results imply that in spite of the leverage of CP G in determining her subscription fees, switching to non-neutrality by an ISP will tip  the balance of the Internet market in favor of this ISP. In fact, the non-neutral ISP can extract some profits from CP G as well as the neutral ISP. The results of this model are dependent on the assumption of the full market coverage by G, since this assumption may reduce the strength of the leverage of G.

A direction for future works is to investigate the market equilibrium when G is not restricted to cover the end-user market. Another possible  direction, is to consider investment decisions by ISPs.

\appendix
\section{Proof of Theorem 1 }

In order to prove the Theorem, we first narrow down the candidate answers in Lemma \ref{lemma:upper} and \ref{lemma:sets}. 

\begin{lemma}\label{lemma:upper}
In an optimum solution of \eqref{equ:maximization}-\eqref{equ:cons2}, at least one of $q_N$ or $q_{NN}$ should be equal to its upper bound.
\end{lemma}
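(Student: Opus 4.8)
The plan is to exploit the fact that the objective separates cleanly once we change variables from $(q_N,q_{NN})$ to the \emph{level} $q_N+q_{NN}$ and the \emph{gap} $\Delta q=q_{NN}-q_N$. First I would substitute $q_{NN}=q_N+\Delta q$ into \eqref{equ:maximization} and collect terms; a short computation gives
\[
\pi_G=\frac{q_N+q_{NN}}{2}-\frac{\tilde p}{2}-\frac{(\Delta q-\tilde p)^2}{6t}.
\]
The point of this rewriting is that, holding the gap $\Delta q$ fixed, $\pi_G$ is \emph{strictly increasing} in the level $q_N+q_{NN}$, while every other dependence enters through $\Delta q$ alone.

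Next I would observe that the feasible region has the same structure. Inspecting \eqref{equ:const1} and \eqref{equ:cons2}, each upper bound ($u_{q_N}$, $u_{q_{NN}}$, $u'_{q_N}$, $u'_{q_{NN}}$) is a function of $\Delta q$ only (a constant $v^*-\tfrac{t}{2}$ plus a multiple of $\Delta q-\tilde p$) and does \emph{not} depend on the level $q_N+q_{NN}$; moreover, which of the two constraint sets applies is decided solely by the sign of $\Delta q-\tilde p$. Thus both the objective and the constraints are invariant, in the sense that matters, under a common upward shift of $q_N$ and $q_{NN}$.

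The argument then proceeds by contradiction. Suppose $(q_N,q_{NN})$ is optimal yet neither upper bound is tight, so that $q_N$ and $q_{NN}$ are each strictly below the bound appropriate to their region. Consider the shifted point $(q_N+\epsilon,\,q_{NN}+\epsilon)$ for small $\epsilon>0$. This leaves $\Delta q$ unchanged, so it keeps us in the same constraint regime with the same (unchanged) upper bounds; for $\epsilon$ small enough both shifted coordinates remain strictly below those bounds, hence the point is feasible. By the displayed identity the gap term is untouched and the level rises by $2\epsilon$, so $\pi_G$ strictly increases by $\epsilon$, contradicting optimality. Therefore at least one of $q_N$, $q_{NN}$ must equal its upper bound.

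I expect the only genuine obstacle to be the algebraic reduction of the objective to the level--gap form above; once that identity is in hand, the strict monotonicity in the level together with the gap-only dependence of the constraints makes the shifting argument essentially immediate. The one point to handle with care is the boundary case $\Delta q=\tilde p$, where the two constraint sets coincide and all four bounds equal $v^*-\tfrac{t}{2}$; there the same common shift applies verbatim, so it is covered without extra work.
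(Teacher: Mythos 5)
Your proof is correct and takes essentially the same route as the paper's: both argue by contradiction via the common shift $(q_N,q_{NN})\mapsto(q_N+\epsilon,q_{NN}+\epsilon)$, which preserves $\Delta q$ (hence feasibility, since the bounds in \eqref{equ:const1}--\eqref{equ:cons2} depend only on $\Delta q$) while strictly increasing $\pi_G$. Your explicit level--gap identity $\pi_G=\tfrac{q_N+q_{NN}}{2}-\tfrac{\tilde p}{2}-\tfrac{(\Delta q-\tilde p)^2}{6t}$ is a correct computation that merely makes transparent the monotonicity step the paper asserts directly from \eqref{equ:maximization}.
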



\begin{proof}
Suppose not. Thus, the optimum $q_N$ and $q_{NN}$ are both strictly smaller than their upper bounds. Consider a sufficiently small $\epsilon>0$. Note that $q_N+\epsilon$ and $q_{NN}+\epsilon$ are feasible solutions. Using (\ref{equ:maximization}), it can be concluded that $\pi_G(q_N+\epsilon,q_{NN}+\epsilon)>\pi_G(q_N,q_{NN})$.   This is a contradiction with $q_N$ and $q_{NN}$ being optimum solutions.  
\end{proof}

\begin{lemma}\label{lemma:sets}
There are four sets of candidate answers to \eqref{equ:maximization}-\eqref{equ:cons2}, depending on $\tilde{p}$: 
\begin{enumerate}
\item $  -5 \Delta q  \leq  \tilde{p} \leq \Delta q$ and $q_{NN}=u_{q_{NN}}$.
\item  $ \tilde{p} \leq \min\{\Delta q, -5 \Delta q\}$ and $q_{N}=u_{q_{N}}$.
\item $\tilde{p}\geq \max\{\Delta q,-5 \Delta q \}$ and  $q_{NN}=u'_{q_{NN}}$. 
\item  $ \Delta q \leq \tilde{p} \leq -5 \Delta q $ and $q_N=u'_{q_N}$.
\end{enumerate}
\end{lemma}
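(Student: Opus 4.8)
The plan is to combine Lemma~\ref{lemma:upper} with a case analysis driven by the two feasibility regimes. By Lemma~\ref{lemma:upper}, in any optimum at least one of $q_N$, $q_{NN}$ sits at its upper bound. The constraint set itself splits at $\Delta q = \tilde{p}$: for $\Delta q \geq \tilde{p}$ the bounds \eqref{equ:const1} apply, while for $\Delta q \leq \tilde{p}$ the bounds \eqref{equ:cons2} apply. So I would partition the argument into these two regimes, and inside each regime into the sub-case ``$q_{NN}$ at its bound'' and the sub-case ``$q_N$ at its bound.'' This yields exactly $2\times 2 = 4$ cases, which I claim coincide with the four listed candidate sets.

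For each case the mechanics are identical. Fix the binding constraint as an equality, use $q_N = q_{NN} - \Delta q$ to eliminate the second variable, and substitute into the inequality for the \emph{non-binding} constraint; the resulting inequality, together with the defining regime inequality, is precisely the condition stated in the lemma. For instance, in the regime $\Delta q \geq \tilde{p}$ with $q_{NN} = u_{q_{NN}} = v^* - \frac{t}{2} + \frac{\Delta q - \tilde{p}}{6}$, one obtains $q_N = v^* - \frac{t}{2} - \frac{5\Delta q + \tilde{p}}{6}$, so the remaining constraint $q_N \leq u_{q_N} = v^* - \frac{t}{2}$ reduces to $5\Delta q + \tilde{p} \geq 0$, i.e. $\tilde{p} \geq -5\Delta q$; combined with the regime inequality $\tilde{p} \leq \Delta q$ this is exactly item~1. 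The remaining three cases follow the same recipe: setting $q_N = u_{q_N}$ in the same regime yields $\tilde{p} \leq -5\Delta q$ (item~2), and the two sub-cases of the regime $\Delta q \leq \tilde{p}$, which instead use the $\Delta q$-dependent bound $u'_{q_N}$, produce items~3 and~4.

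The only point that needs care, and the main obstacle, is handling the $\Delta q$-dependence of the bounds $u_{q_{NN}}$ and $u'_{q_N}$ together with exhaustiveness: because those bounds move with $\Delta q$, the binding equality must be substituted \emph{before} comparing to the other constraint, and one must check that the corners where \emph{both} constraints bind (which occur exactly at $\tilde{p} = -5\Delta q$) and the seam $\Delta q = \tilde{p}$ where the two regimes meet are consistently assigned to these four families. Lemma~\ref{lemma:upper} then closes the argument: since every optimum has at least one binding constraint, it lies in one of the two sub-cases of whichever regime contains it, hence in one of the four sets. I would stress that the lemma only \emph{localizes} the optimum to these four one-parameter families (each parametrized by $\Delta q$); pinning down the optimal $\Delta q$ within each family, and comparing the four resulting values of $\pi_G$, is deferred to the proof of Theorem~\ref{theorem:eq2} itself.
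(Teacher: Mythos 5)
Your proposal is correct and takes essentially the same route as the paper's own proof: the paper likewise splits into the two regimes \eqref{equ:const1} ($\Delta q \geq \tilde{p}$) and \eqref{equ:cons2} ($\Delta q \leq \tilde{p}$), considers in each regime which of $q_N$, $q_{NN}$ is at its upper bound (justified by Lemma~\ref{lemma:upper}), and substitutes the binding equality into the non-binding constraint to obtain the condition $\tilde{p} \geq -5\Delta q$ or $\tilde{p} \leq -5\Delta q$ in each of the four cases. Your worked computation for item~1 reproduces the paper's step $\Delta q \geq u_{q_{NN}} - u_{q_N} = \frac{\Delta q - \tilde{p}}{6} \Rightarrow \tilde{p} \geq -5\Delta q$ in substance.
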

\begin{proof}
First note that if $\Delta q\geq \tilde{p}$, then  $u_{q_{NN}}\geq u_{q_N}$. If $q_{NN}=u_{q_{NN}}$, then the constraint for $q_{N}$ should be satisfied:

$$
\ba 
q_N=u_{q_{NN}}-\Delta q &\leq u_{q_N}\Rightarrow \Delta q\geq u_{q_{NN}}-u_{q_N}\\
&\Rightarrow  \Delta q \geq \frac{\Delta q-\tilde{p}}{6} \Rightarrow \tilde{p} \geq -5 \Delta q 
\ea
$$
Thus, $ \Delta q \geq \tilde{p} \geq -5 \Delta q $, which is the first candidate set. 

On the other hand, if $q_N=u_{q_N}$, then the constraint for $q_{NN}$ should be satisfied:
$$
\ba 
q_{NN}=u_{q_N}+\Delta q \leq u_{q_{NN}}&\Rightarrow u_{q_{NN}}-u_{q_N}\geq \Delta q\\
& \Rightarrow \tilde{p} \leq -5 \Delta q 
\ea
$$
Thus, $ \tilde{p} \leq \min\{\Delta q, -5 \Delta q \}$.

In another case, if  $\Delta q\leq \tilde{p}$, then  $u'_{q_{N}}\geq u'_{q_{NN}}$. If $q_{NN}=u'_{q_{NN}}$ then,
$$
\ba 
q_N=u'_{q_{NN}}-\Delta q \leq u'_{q_N} &\Rightarrow \Delta q\geq u'_{q_{NN}}-u'_{q_N}\\
&\Rightarrow \tilde{p} \geq -5\Delta q 
\ea 
$$
Thus, $\tilde{p}\geq \max\{\Delta q,-5\Delta q \}$.

On the other hand, if $q_N=u'_{q_N}$ then,

$$
\ba 
q_{NN}=u'_{q_N}+\Delta q \leq u'_{q_{NN}}&\Rightarrow u'_{q_{NN}}-u'_{q_N}\geq \Delta q\\
&\Rightarrow \tilde{p} \leq -5 \Delta q 
\ea
$$
Thus, using $\Delta q\leq \tilde{p}$, $ \Delta q \leq \tilde{p} \leq -5\Delta q $. The proof is complete.

\end{proof}

In order to find the best response, first consider the first set of candidate answers. Thus, $q_{NN}=u_{q_{NN}}=v^*-\frac{t}{2}+\frac{\Delta q-\tilde{p}}{6}$, $q_N=q_{NN}-\Delta q$, and subsequently equation (\ref{equ:maximization}) is only a function of  $\Delta q$.  Note that (\ref{equ:maximization}) is concave with respect to $\Delta q$, and, 
$$
\begin{aligned}
\frac{d q_{NN}}{d \Delta q}&=\frac{1}{6}\\
\frac{d q_N}{d \Delta q}&=-\frac{5}{6}
\end{aligned}
$$

Solving $\frac{d\pi_G}{d\Delta q}=0$ yields that $\Delta q^*=\tilde{p}-t$. Therefore, $\Delta q^*<\tilde{p}$, which does not satisfy the condition for the first candidate set. Thus, the first set is eliminated.

The second candidate set is $q_{N}=u_{q_{N}}=v^*-\frac{t}{2}$, $q_{NN}=q_{N}+\Delta q$, and $\tilde{p} \leq \min\{\Delta q, -5\Delta q \}$. Note that $\pi_G$ is concave with respect to $\Delta q$, and,

$$
\begin{aligned}
\frac{d q_{N}}{d \Delta q}&=0\\
\frac{d q_{NN}}{d \Delta q}&=1
\end{aligned}
$$

Solving $\frac{d\pi_G}{d\Delta q}=0$ yields $\Delta q^*=\tilde{p}+\frac{3}{2}t$. Note that $\Delta q^*>\tilde{p}$. Thus, $\Delta q^*$ is the optimum solution if $\tilde{p}\leq-5\Delta q^*$. Thus, $\Delta q^*\leq -\frac{\tilde{p}}{5}\Rightarrow \tilde{p}\leq -\frac{15t}{12}=-\frac{5t}{4}$ is the condition for optimality of this candidate set.

With similar computations, the third candidate set yields the optimum solution of $\Delta q^*=\tilde{p}-\frac{3}{2}t$.  The optimality condition of this set is $\Delta q^*\geq -\frac{\tilde{p}}{5}\Rightarrow \tilde{p}\geq \frac{5t}{4}$.

For the fourth set of candidate answers the optimum answer is $\Delta q^*=\tilde{p}+t$. However, $\Delta q^*>\tilde{p}$, which does not satisfy the condition for this set. Thus, the forth set is eliminated. 

Now, we prove Theorem~\ref{theorem:eq2}: 
\begin{proof}
We argued that in the first and the fourth sets of candidate answers the optimal solutions derived by the first order condition are not feasible. Therefore, since the payoff is concave, in these two sets, the candidate answers are on the boundaries.

We first consider the second and third sets of feasible answers. Note that in the second set of feasible answer $\tilde{p}\leq 0$, since $\tilde{p}\leq \min\{\Delta q,-5\Delta q\}$, and in the third set, $\tilde{p}\geq 0$, since $\tilde{p}\geq \max\{\Delta q,-5\Delta q\}$.

Thus if $\tilde{p}\leq 0$, the second candidate set should be considered. The condition  $ \tilde{p} \leq \min\{\Delta q, -5\Delta q\}$  is equivalent to $\tilde{p}\leq \Delta q \leq -\frac{\tilde{p}}{5}$. We argued that in the second set, if $\tilde{p}\leq -\frac{5t}{4}$, then $\Delta q^*=\tilde{p}+\frac{3t}{2}$. If not, then the optimum solution should be either $\Delta q^*=\tilde{p}$ or $\Delta q^*=-\frac{\tilde{p}}{5}$. By comparing the payoffs, we prove that in the second candidate set, when $-\frac{5t}{4}\leq \tilde{p}\leq 0$, $\Delta q^*=-\frac{\tilde{p}}{5}$.  

If $\Delta q=\tilde{p}$, in the second set:

\be
\ba
q_{N}=u_{q_{N}}&=v^*-\frac{t}{2}\\
q_{NN}=q_{N}+\Delta q&=v^*-\frac{t}{2}+\tilde{p}\\
\pi_G|_{\Delta q=\tilde{p}}&=v^*-\frac{t}{2}
\ea
\non
\ee
and if $\Delta q=-\frac{\tilde{p}}{5}$,

\be
\ba
q_{N}&=u_{q_{N}}=v^*-\frac{t}{2}\\
q_{NN}&=q_{N}+\Delta q=v^*-\frac{t}{2}-\frac{\tilde{p}}{5}\\
\pi_G|_{\Delta q=-\frac{\tilde{p}}{5}}&=(v^*-\frac{t}{2}-\frac{6\tilde{p}}{5})(\frac{1}{2}+\frac{\tilde{p}}{5t})+(v^*-\frac{t}{2})(\frac{1}{2}-\frac{\tilde{p}}{5t})\\
&=v^*-\frac{t}{2}-\frac{6\tilde{p}}{5}(\frac{1}{2}+\frac{\tilde{p}}{5t})
\ea
\non
\ee
Since $ -\frac{5t}{4} \leq \tilde{p}\leq 0$,  $\frac{1}{2}+\frac{\tilde{p}}{5t}>0$, and $\pi_G|_{\Delta q=-\frac{\tilde{p}}{5}}>\pi_G|_{\Delta q=\tilde{p}}$. Thus, in the second candidate set, when  $ -\frac{5t}{4} \leq \tilde{p}\leq 0$, $\Delta q^*=-\frac{\tilde{p}}{5}$. 

On the other hand, if $\tilde{p}\geq 0$, the third candidate set should be considered. The condition $ \tilde{p} \geq \max\{\Delta q, -5\Delta q\}$  is equivalent to $-\frac{\tilde{p}}{5} \leq \Delta q \leq \tilde{p}$. We argued that in the third set, if $\tilde{p}\geq \frac{5t}{4}$, then $\Delta q^*=\tilde{p}-\frac{3t}{2}$. If not, then the optimum solution should be either $\Delta q^*=\tilde{p}$ or $\Delta q^*=-\frac{\tilde{p}}{5}$. Again, by comparing the payoffs, we prove that in the third candidate set, when $0 \leq \tilde{p}\leq \frac{5t}{4}$, $\Delta q^*=-\frac{\tilde{p}}{5}$.

If $\Delta q=\tilde{p}$, in the third set:

\be
\ba
q_{NN}=u'_{q_{NN}}&=v^*-\frac{t}{2}\\
q_{N}=q_{NN}-\Delta q&=v^*-\frac{t}{2}-\tilde{p}\\
\pi_G|_{\Delta q=\tilde{p}}&=v^*-\frac{t}{2}-\tilde{p}
\ea
\non
\ee
and if $\Delta q=-\frac{\tilde{p}}{5}$,

\be
\ba
q_{NN}&=u'_{q_{NN}}=v^*-\frac{t}{2}\\
q_{N}&=q_{NN}-\Delta q=v^*-\frac{t}{2}+\frac{\tilde{p}}{5}\\
\pi_G|_{\Delta q=-\frac{\tilde{p}}{5}}&=(v^*-\frac{t}{2}-\tilde{p})(\frac{1}{2}+\frac{\tilde{p}}{5t})+(v^*-\frac{t}{2}+\frac{\tilde{p}}{5})(\frac{1}{2}-\frac{\tilde{p}}{5t})\\
&=v^*-\frac{t}{2}+\frac{\tilde{p}}{5}-\frac{6\tilde{p}}{5}(\frac{1}{2}+\frac{\tilde{p}}{5t})
\ea
\non
\ee

Since $ 0 \leq \tilde{p}\leq \frac{5t}{4}$,  $\frac{1}{2}+\frac{\tilde{p}}{5t}<1$, and $\pi_G|_{\Delta q=-\frac{\tilde{p}}{5}}>\pi_G|_{\Delta q=\tilde{p}}$. Thus, in the third candidate set, when  $ -\frac{5t}{4} \leq \tilde{p}\leq 0$, $\Delta q^*=-\frac{\tilde{p}}{5}$.

Now, consider  the first set. The condition $ -5\Delta q \leq  \tilde{p} \leq \Delta q$ is equivalent to $\Delta q \geq \max\{\tilde{p},-\frac{\tilde{p}}{5}\}$. Since the solution to the first order condition is not feasible, $\Delta q^*=\tilde{p}$ if $\tilde{p}\geq 0$, and $\Delta q^*=-\frac{\tilde{p}}{5}$ if $\tilde{p}\leq 0$. If $\tilde{p}\geq 0$, $q_{NN}=v^*-\frac{t}{2}$, and if $\tilde{p}\leq 0$, $q_{NN}=v^*-\frac{t}{2}-\frac{\tilde{p}}{5}$ and $q_{N}=v^*-\frac{t}{2}$. Both are two feasible points in the third and second candidate sets, respectively. Thus none of them can be an optimum solution.

The condition for the fourth case is $ \Delta q \leq \tilde{p} \leq -5 \Delta q$, which is equivalent to $\Delta q \leq \min \{\tilde{p},-\frac{\tilde{p}}{5}\}$. Since the solution to the first order condition is not feasible, $\Delta q^*=\tilde{p}$ if $\tilde{p}\leq 0$, and $\Delta q^*=-\frac{\tilde{p}}{5}$ if $\tilde{p}\geq 0$. If $\tilde{p}\leq 0$, $q_{N}=v^*-\frac{t}{2}$, and if $\tilde{p}\geq 0$, $q_{N}=v^*-\frac{t}{2}+\frac{\tilde{p}}{5}$ and $q_{NN}=v^*-\frac{t}{2}$. Both are two feasible points in the candidate second and third candidate sets, respectively. Thus none of them can be an optimum solution. The result follows.
\end{proof}

\end{document}